\pgfplotsset{compat=newest}
\pgfplotsset{plot coordinates/math parser=false}
\newcommand{\diff}{\ensuremath{\mathrm{d}}}
\newcommand{\expect}[2][]{\ensuremath{\mathbb{E}_{#1}\left[#2\right]}}
\newcommand{\inv}[1]{\ensuremath{#1^{-1}}}
\newcommand{\positive}[1]{\ensuremath{\left[#1\right]^{+}}}
\newcommand{\leqone}[1]{\ensuremath{\left[#1\right]^{\leq 1}}}
\newcommand{\X}{\ensuremath{\bm{X}}}
\newcommand{\Y}{\ensuremath{\bm{Y}}}
\newcommand{\xt}{\ensuremath{\tilde{x}}}
\newcommand{\yt}{\ensuremath{\tilde{y}}}
\newcommand{\lx}{\ensuremath{{\lambda_{x}}}}
\newcommand{\ly}{\ensuremath{{\lambda_{y}}}}
\newcommand{\ratemac}{\ensuremath{R_{\text{MAC}}}}
\renewcommand{\Pr}{\ensuremath{\mathbb{P}}}
\theoremstyle{plain}%
\newtheorem{thm}{Theorem}%
\newtheorem{lem}[thm]{Lemma}
\theoremstyle{definition}
\newtheorem{defn}{Definition}%
\theoremstyle{remark}
\newtheorem{rem}{Remark}
\newtheorem*{rem*}{Remark}
\newtheorem{example}{Example}
\definecolor{plot2}{HTML}{127933}
\definecolor{plot1}{HTML}{242bb3}
\definecolor{plot0}{HTML}{F04D4A}
\definecolor{plot3}{HTML}{cc9f2b}
\definecolor{plot4}{HTML}{27AFB3}
\definecolor{change}{HTML}{0096b8}%
\title{Copula-Based Bounds for Multi-User Communications -- Part I: Average Performance}
\author{Eduard A. Jorswieck, \IEEEmembership{Fellow, IEEE} and Karl-Ludwig Besser, \IEEEmembership{Student Member, IEEE}
	\thanks{The authors are with the Institute of Communications Technology, Technische Universit\"at Braunschweig, 38106 Braunschweig, Germany (email: \{{e.jorswieck}, {k.besser}\}@tu-bs.de).}
	\thanks{This work is supported in part by the German Research Foundation (DFG) under grant JO\,801/23-1.}
}
\begin{document}
\maketitle

\begin{abstract}%
	Statistically independent or positively correlated fading models are usually applied to compute the average performance of wireless communications. However, there exist scenarios with negative dependency and it is therefore of interest how different performance metrics behave for different general dependency structures of the channels. Especially best-case and worst-case bounds are practically relevant as a system design guideline.
	In this two-part letter, we present methods and tools from dependency modeling which can be applied to analyze and design multi-user communications systems exploiting and creating dependencies of the effective fading channels. The first part focuses on fast fading with average performance metrics, while the second part considers slow fading with outage performance metrics.
\end{abstract}

\begin{IEEEkeywords}
Network reliability, Joint distributions, Fading channels, Ergodic performance, Fast fading.
\end{IEEEkeywords}

\section{Introduction and Motivation}
In modern communication systems, multiple links are established at the same time. On the one hand, this is done to serve multiple users at the same time. On the other hand, using multiple antennas exploits spatial diversity~\cite{Heath2018}.
A conventional assumption in the literature is that all of these channels are statistically independent~\cite{Simon2000} or experience positively correlated fading as in the Kronecker model~\cite{Wang2007}. However, real measurements, e.g., for spectral diversity systems, show that this assumption does not always hold in practice~\cite{Lee1973,Peters2014}. The theoretical side of this has not been studied extensively. A relatively recent work \cite{Biglieri2016} shows the impact that the (potentially negative) dependence of channels can have on the performance of wireless communication systems.

In this first part of the two-part letter, we focus on the ergodic performance in the context of fast-fading channels. First, we will give a brief example for arbitrarily correlated channels in Section~\ref{sec:motivation-example}. Next, we introduce theoretical foundations and tools for dependency analysis in Section~\ref{sec:expectation-bounds-theory}. The presented methods will be illustrated with some plastic examples. Applications to problems in the context of communications are provided in Section~\ref{sec:expectation-bounds-applications}. Methods and tools for the analysis of the outage probability for scenarios with slow-fading channels can be found in the second part~\cite{Besser2020part2}.

Multi-user communications over fading channels is a challenging task of constant interest over the last decades \cite{Biglieri1998}. Typical wireless channels have random fading channel coefficients due to multipath propagation  \cite{Rappaport2002}. While it is well established to perform wireless channel measurements of a single link \cite{Bello1963} between one transmitter and one receiver and to develop stochastic channel models based on different scenarios and parameters \cite{Fleury1999}, it is much more involved to measure wireless multi-point channels and to derive corresponding stochastic channel models \cite{Xu2000}. 

Depending on the fading channel characteristics, the operating regimes slow and fast fading, as well as the corresponding performance metrics average (or ergodic) and outage capacity are distinguished \cite{Tse2005}. Indeed, there are some examples of multiuser channels where the fundamental limits do not depend on the joint distribution of the channels: these are all channels where the same marginal property for the capacity region holds \cite{Moser2019}, e.g., broadcast channels. There, the capacity region depends on the marginal conditional probabilities of the received signal, given the channel input, but not on the joint distribution of the received signals \cite{Cover1972}. 

In most multi-user communication scenarios, the performance depends on the joint distribution of the fading channel realizations. Most notably, this can be observed in receive diversity schemes, such as antenna arrays or frequency diversity schemes \cite{Hamdi2008,Zhu2019}. However, the impact of the dependency of random variables on entropy in general \cite{Cover1994} and of the independence assumption in wireless communication analysis \cite{Biglieri2016} has gained attention. In most of the previous work, only linear correlation is considered to describe the dependency. Copulas~\cite{Nelsen2006} on the other hand allow modeling general dependency structures and have also already been used in the area of communications. In \cite{Peters2014}, it was shown that real channel measurements can follow a nonlinear dependency structure which can be modeled using copulas. In \cite{Ghadi2020}, the outage probability for Rayleigh fading channels following a certain dependency structure is derived. Copulas have also been used to model interference in \gls{iot} wireless networks~\cite{Zheng2019}.

In \cite{Haber1974}, a method is proposed how negatively correlated channels can be constructed. As noted in \cite{Akki1985}, this could have a relevant application in air-to-ground communication with low flying aircrafts. Recently, wireless communication networks including \glspl{uav} have gained attention~\cite{Cao2018,Mozaffari2019}. It might therefore be possible to design transmission strategies which actively control the dependency structure of the channels in such scenarios. The bounds provided in this work can then be used as performance benchmarks and design guidelines.

Finally, in \cite{Lin2019} the freedom to choose a particular joint distribution for fixed marginal fading distributions is exploited to derive novel ergodic capacity region results for broadcast and classes of interference channels.

\textit{Notation:}
Throughout this work, we use the following notation. Random variables are denoted in capital boldface letters, e.g., $\X$, and their realizations in small letters, e.g., $x$. We will use $F$ and $f$ for a probability distribution and its density, respectively. The expectation and variance are denoted by $\mathbb{E}$ and $\mathbb{V}$, respectively, and the probability of an event by $\Pr$. It is assumed that all considered distributions are continuous. The uniform distribution on the interval $[a,b]$ is denoted as $\mathcal{U}[a,b]$.
As a shorthand, we use $\positive{x}=\max\left[x, 0\right]$; and similarly $\leqone{x}=\min\left[x, 1\right]$.
The derivative of a univariate function $g$ is written as $g^{\prime}$.
The real numbers and extended real numbers are denoted by $\mathbb{R}$ and $\bar{\mathbb{R}}$, respectively. Logarithms, if not stated otherwise, are assumed to be with respect to the natural base.
\section{Example for Simple Correlated Channels}\label{sec:motivation-example}
First, we provide a brief example where fading channels can have an arbitrary (also negative) correlation.
It is a simplified example for illustration purposes only.

The setup is the following.
We consider an (infinitely) large plain ground. A transmitter with a single antenna is placed at height $h_{\text{Tx}}$. Two receive antennas are placed on top of each other at heights $h_1$ and $h_2=h_1+\Delta h$. The distance (on the ground) between them and the transmitter is $d$.
The transmitter transmits a signal $x(t)$ to the two receive antennas. Each antenna $i$ receives a \gls{los} signal after a delay of $\tau_{0,i}$. Due to the reflecting ground, they also receive a \gls{nlos} signal. This component arrives at the receivers after the delays $\tau_{1,i}, i=1,2$.
This scenario could occur when an \gls{uav} flies above a calm water surface or in a flat rural environment \cite{Akki1985}. 

A similar example for a simple two path fading model is given in \cite{Haber1974}. The difference to \cite{Haber1974} is that they consider a system with only one receive antenna but multiple transmission frequencies. However, we can re-use the signal model from \cite{Haber1974} and extend it to our scenario in the following.
The received signal at antenna $i$ is given as~\cite[Eq.~(1)]{Haber1974}
\begin{equation}
r_i(t) = A_1 \cos\left(\omega (t-\tau_{0,i})\right) + A_{2}\cos\left(\omega(t-\tau_{1,i})\right)\,.
\end{equation}
The squared envelope is then given by~\cite[Eq.~(2)]{Haber1974}
\begin{equation}\label{eq:def-square-envelope}
X_i = A_1^2 + A_2^2 + 2 A_1 A_2 \cos\left(\omega(\tau_{0,i}-\tau_{1,i})\right)\,.
\end{equation}
The delays $\tau$ are calculated using the relation $\tau=s/c$, where $c$ is the propagation speed and $s$ is the respective distance. The distances $s$ (\gls{los} and \gls{nlos}) are calculated using basic trigonometry. The detailed calculations and simulations can be found at \cite{BesserGitlab}.

In Fig.~\ref{fig:motivation-rec-envelopes}, the received envelopes $X_i$ of receive antennas 1 and 2 are shown over the distance $d$ for different values of $\Delta h$.
It can be seen that $X_1$ and $X_2$ are positively correlated for $\Delta h=\SI{0.05}{\meter}$ with correlation coefficient $\rho=0.31$, while they are negatively correlated for $\Delta h=\SI{0.1}{\meter}$ with $\rho=-0.64$ (for a uniform distribution of $d$ between \SIlist{20;50}{\meter}).
\begin{figure}
	\centering
	\begin{tikzpicture}%
\begin{axis}[
	width=\linewidth,
	height=0.23\textheight,
	xmin=20, xmax=50,
	ymin=0, ymax=3,
	xlabel={Distance $d$},
	ylabel={Receive Envelopes $X_i$},
	legend cell align=left,
]

\addplot[plot0, mark=*, mark repeat=30, smooth, thick] table[x=distance, y=x1] {data/rec_evelopes-f2.00E+09-htx10-h11.0-dh0.05.dat};
\addlegendentry{$X_1$};

\addplot[plot1, mark=square*, mark repeat=30, smooth, dashed] table[x=distance, y=x2] {data/rec_evelopes-f2.00E+09-htx10-h11.0-dh0.05.dat};
\addlegendentry{$X_2$ with $\Delta h=\SI{0.05}{\meter}$};

\addplot[plot2, mark=triangle*, mark repeat=30, smooth] table[x=distance, y=x2] {data/rec_evelopes-f2.00E+09-htx10-h11.0-dh0.1.dat};
\addlegendentry{$X_2$ with $\Delta h=\SI{0.1}{\meter}$};
\end{axis}
\end{tikzpicture}
	
	\vspace*{-1em}
	\caption{Received envelopes $X_i$ over distance $d$ for different values of $\Delta h$. The parameters are $A_1=1$, $A_2=0.5$, $f=\SI{2}{\giga\hertz}$, $h_{\text{Tx}}=\SI{10}{\meter}$, and $h_1=\SI{1}{\meter}$.}
	\label{fig:motivation-rec-envelopes}
\end{figure}
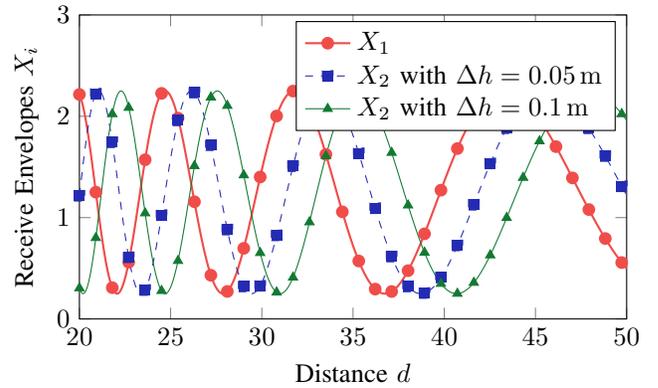
In the interactive supplementary material~\cite{BesserGitlab}, we also give a simple geometry-based simulation, where multiple receivers with fixed antenna heights are placed at random positions around the transmitter and the correlation is estimated. We encourage the reader to try different parameter constellations and observe the behavior.
\section{Bounds on the Expected Value}\label{sec:expectation-bounds-theory}
One performance measure, which is of interest to a system designer, is the ergodic throughput. In the case of fast fading, the channel varies during the transmission. Therefore, the average value of metrics like the ergodic channel capacity is used to evaluate the system performance~\cite{Tse2005}. {Weighted average mean-square error expressions were also applied to quantify the average system performance \cite{Qin2017}.}
In this section, we will present results which allow bounding the expected value of a function of random variables. The results originate from optimal mass transport and are taken from \cite{Rachev1998}.

First, we will state the problem formulation. 
We are given two random variables $\X$ and $\Y$ over the real numbers with fixed marginals $F_{\X}$ and $F_{\Y}$, e.g., measured fading distributions at different locations. We are now interested in the best upper and lower bounds on the expected value of a performance measure $c(\X, \Y)$, e.g., the sum rate, over all possible joint distributions of $\X$ and $\Y$. Mathematically speaking, we want to find %
\begin{equation*}
\inf_{F_{\X, \Y}} \expect[(\X, \Y)]{c(\X, \Y)} \enspace \text{and} \enspace \sup_{F_{\X, \Y}} \expect[(\X, \Y)]{c(\X, \Y)}%
\end{equation*}
{for fixed marginal distributions $F_{\X}$ and $F_{\Y}$.}

For specific functions $c$, the solutions for the upper and lower bound are attained for comonotonic and countermonotonic random variables, respectively. An exact definition based on copulas will be given in the second part.
The central sufficient condition, which the cost $c$ can fulfill, is the Monge condition~\cite[(3.1.7)]{Rachev1998}.
\begin{defn}[{Monge Condition \cite[(3.1.7)]{Rachev1998}}]\label{def:monge-condition}
	A cost function $c: \mathbb{R}^2\to\mathbb{R}$ which satisfies the Monge condition is right-continuous and fulfills
	\begin{equation}\label{eq:monge-condition}
	c(x', y') + c(x, y) \leq c(x, y') + c(x', y) \,,
	\end{equation}
	for all $x'\geq x$ and $y'\geq y$.
\end{defn}
With this definition, we are able to restate the result from \cite[Thm.~3.1.2(b)]{Rachev1998} about the bounds on the expected value of $c(\X, \Y)$.
\begin{thm}[{\cite[Thm.~3.1.2(b)]{Rachev1998}}]\label{thm:bounds-ruschendorf}
	Let $F_{\X,\Y}$ be a distribution function on $\mathbb{R}^2$ with marginals $F_{\X}$, $F_{\Y}$ and let $(\bm{X}, \bm{Y})\sim F_{\X, \Y}$. Suppose that $c$ satisfies the Monge condition and that $\expect[({\X, \Y})]{c(\bm{X}, \bm{Y})}$ exists and is finite. Then
	\begin{align}
	\int_{0}^{1} c(\inv{F_{\bm{X}}}(u), \inv{F_{\bm{Y}}}(u))\diff{u} &\leq \expect{c(\bm{X}, \bm{Y})} \label{eq:thm-bounds-ruschendorf-lower}\\
	\int_{0}^{1} c(\inv{F_{\bm{X}}}(u), \inv{F_{\bm{Y}}}(1-u))\diff{u} &\geq \expect{c(\bm{X}, \bm{Y})} \label{eq:thm-bounds-ruschendorf-upper}
	\end{align}
	holds.
\end{thm}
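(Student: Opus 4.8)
The plan is to pass to uniform marginals, exploit that the Monge condition is exactly submodularity in order to write the cost as a quantity depending only on the marginals minus a non-negative measure of a rectangle, and then to minimize/maximize the resulting functional over all couplings using the Fr\'echet--Hoeffding bounds. First I would use the probability integral transform: since all distributions are continuous, $U := F_{\bm{X}}(\bm{X})$ and $V := F_{\bm{Y}}(\bm{Y})$ are $\mathcal{U}[0,1]$-distributed with $\bm{X} = \inv{F_{\bm{X}}}(U)$ and $\bm{Y} = \inv{F_{\bm{Y}}}(V)$ almost surely. Setting $\tilde{c}(u,v) := c\big(\inv{F_{\bm{X}}}(u), \inv{F_{\bm{Y}}}(v)\big)$ and using that the quantile functions are non-decreasing, $\tilde{c}$ inherits the Monge condition \eqref{eq:monge-condition}, while $\expect{c(\bm{X},\bm{Y})} = \expect{\tilde{c}(U,V)}$. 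Since the two bounding integrals in \eqref{eq:thm-bounds-ruschendorf-lower}--\eqref{eq:thm-bounds-ruschendorf-upper} equal $\int_0^1 \tilde{c}(u,u)\,\diff{u}$ and $\int_0^1 \tilde{c}(u,1-u)\,\diff{u}$, it suffices to prove $\int_0^1 \tilde{c}(u,u)\,\diff{u} \le \expect{\tilde{c}(U,V)} \le \int_0^1 \tilde{c}(u,1-u)\,\diff{u}$ for every joint law of $(U,V)$ with uniform marginals.

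The Monge condition \eqref{eq:monge-condition} states precisely that every mixed second difference of $\tilde{c}$ over a rectangle is non-positive, that is, $\tilde{c}$ is submodular. Together with right-continuity this yields a Hoeffding-type representation $\tilde{c}(u,v) = a(u) + b(v) - \mu\big([0,u]\times[0,v]\big)$, where $\mu$ is a non-negative measure on $(0,1)^2$ (the negative of the Lebesgue--Stieltjes measure of $\tilde{c}$) and $a,b$ are functions of a single variable. Taking expectations, $\expect{a(U)}$ and $\expect{b(V)}$ depend only on the fixed marginals, and applying Tonelli's theorem to $\mu\big([0,U]\times[0,V]\big) = \int \mathbf{1}[s\le U,\, t\le V]\,\mu(\diff{s}\,\diff{t})$ gives
\begin{equation*}
	\expect{\tilde{c}(U,V)} = \expect{a(U)} + \expect{b(V)} - \int_{(0,1)^2} \Pr\big(U\ge s,\ V\ge t\big)\,\mu(\diff{s}\,\diff{t}) \,.
\end{equation*}
The hypothesis that $\expect{c(\bm{X},\bm{Y})}$ exists and is finite is what rules out an $\infty-\infty$ here and lets one reach this identity via a truncation of $\mu$ away from the boundary of $[0,1]^2$ followed by monotone convergence.

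In this identity only the survival function $\Pr(U\ge s, V\ge t) = 1 - s - t + C(s,t)$ depends on the coupling, through the joint distribution $C(s,t) := \Pr(U\le s, V\le t)$, and it is non-decreasing in $C(s,t)$. The Fr\'echet--Hoeffding inequalities $\max(s+t-1,0) \le C(s,t) \le \min(s,t)$ therefore show that $\Pr(U\ge s, V\ge t)$ is maximized pointwise by the comonotone coupling $V = U$ and minimized pointwise by the countermonotone coupling $V = 1-U$. Since $\mu \ge 0$ enters with a minus sign, $\expect{\tilde{c}(U,V)}$ is minimized by $V = U$, for which $\expect{\tilde{c}(U,V)} = \int_0^1 \tilde{c}(u,u)\,\diff{u}$, giving \eqref{eq:thm-bounds-ruschendorf-lower}; and it is maximized by $V = 1-U$, for which $\expect{\tilde{c}(U,V)} = \int_0^1 \tilde{c}(u,1-u)\,\diff{u}$, giving \eqref{eq:thm-bounds-ruschendorf-upper}.

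I expect the technical heart to be the second step: producing the non-negative measure $\mu$ and the representation from a merely right-continuous submodular function (including the base-point normalization and the sign bookkeeping for the different corners of $[0,1]^2$), and justifying the interchange of expectation and $\mu$-integration using only finiteness of $\expect{c(\bm{X},\bm{Y})}$. The reduction to uniform marginals and the Fr\'echet--Hoeffding comparison are comparatively routine, and can alternatively be phrased directly in terms of comonotone and countermonotone rearrangements of $(\bm{X},\bm{Y})$.
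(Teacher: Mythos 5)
The paper does not prove this theorem itself---it imports it verbatim from \cite[Thm.~3.1.2(b)]{Rachev1998}---and your argument is essentially the classical Cambanis--Simons--Stout/Tchen proof underlying that citation: reduce to uniform marginals, represent the submodular (Monge) cost via a Hoeffding-type decomposition $a(u)+b(v)-\mu\left([0,u]\times[0,v]\right)$ with $\mu\geq 0$, and apply the Fr\'echet--Hoeffding bounds to the survival function $\Pr(U\geq s, V\geq t)$. Your outline is correct, and you have rightly flagged the only delicate points (constructing $\mu$ from mere right-continuity and the integrability of the marginal terms $a$ and $b$; the standard way to avoid the latter is to compare two couplings with identical marginals directly, so that these terms cancel before any integrability question arises).
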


{
\begin{rem}\label{rem:topkis-theorem}
	A function $c$ that fulfills the Monge condition is also called \emph{submodular}. The function $-c$ is then called supermodular~\cite{Puccetti2015}. If $c$ is twice continuously differentiable, an equivalent definition of the Monge condition is given by Topkis's characterization theorem~\cite{Milgrom1990}. It states that a function $c$ is submodular, i.e., it fulfills the Monge condition, if $\partial^2 c(x,y)/\partial x\partial y\leq 0$ holds for all $x, y$.
\end{rem}
}

\begin{example}
	As a first example, we will take a look at the {\gls{sinr} as the} cost function $c(x, y) = \frac{x}{1+y}$, 
	with exponentially distributed random variables, i.e., $\X\sim\exp(\lx)$ and $\Y\sim\exp(\ly)${\footnote{Any other marginal distribution, e.g., Nakagami-$m$ or even heterogeneous distributions, e.g. $\X$ Ricean and $\Y$ log-normal, works here, too.}}. {The signal of interest in this case is $\X$, while $\Y$ represents the interference.}
	First, we need to show that $c$ satisfies the Monge condition~\eqref{eq:monge-condition}. This can be done as follows
	\begin{align*}
	c(x', y') + c(x, y) - c(x, y') - c(x', y)&\leq 0\\
	\Leftrightarrow x'(1+y) + x(1+y') - x'(1+y') - x(1+y) &\leq 0\\
	\Leftrightarrow (x'-x)(y-y') \leq 0\,,
	\end{align*}
	with $x'\geq x$ and $y'\geq y$ by definition.
	Therefore, we can apply Theorem~\ref{thm:bounds-ruschendorf} to bound the expected value.
	For exponentially distributed random variables with $\lx=1$ and $\ly=2$, this can be evaluated to $0.555 \leq \expect[(\X, \Y)]{\frac{\X}{1+\Y}}\leq 0.870$.
	For comparison, the expected value for the case of independent $\X$ and $\Y$ is around \num{0.723}.
	The detailed calculations can be found at \cite{BesserGitlab}.
\end{example}

{It can be seen from Theorem~\ref{thm:bounds-ruschendorf}, that the bounds are attained for comonotonic and countermonotonic random variables $\X$ and $\Y$. This means that both bounds are individually tight and are achieved by different joint distributions. All other possible joint distributions of $\X$ and $\Y$ achieve average performances between the bounds from Theorem~\ref{thm:bounds-ruschendorf}.} %

{
\begin{rem}\label{rem:monge-condition-minus-c}
	Note that it is also possible to apply Theorem~\ref{thm:bounds-ruschendorf}, if $-c$ fulfills the Monge condition.
	In this case, the following holds
	\begin{equation}
	-\mathsf{UB}_{-c} \leq \expect{c(\X, \Y)} \leq -\mathsf{LB}_{-c}\,,\\
	\end{equation}
	where $\mathsf{UB}_{-c}$ and $\mathsf{LB}_{-c}$ are the upper and lower bound on the expected value of $-c$ according to \eqref{eq:thm-bounds-ruschendorf-upper} and \eqref{eq:thm-bounds-ruschendorf-lower}, respectively.
\end{rem}
}

{
\begin{rem}
	It is possible to extend the results to more than two random variables numerically.
	\cite{Puccetti2015a} offers an algorithm to numerically compute sharp bounds on the expected value of supermodular functions with fixed marginals. This can also be applied to measured channel data.
\end{rem}
}
\section{Bounds on the Ergodic Performance}\label{sec:expectation-bounds-applications}

{In this section, we will give various examples from the area of communications where the results from the previous section can be applied.}
In the following, we assume to have perfect \gls{csi} at the receiver and statistical \gls{csi} at the transmitter.

\subsection{Ergodic Capacity MAC}
Our first example, is the \gls{mac}. The ergodic capacity region for fast fading channels is derived in \cite[Section 23.5]{Gamal2011}. It is achieved with successive interference cancellation. The achievable rates for fixed decoding order can be expressed as 
\begin{equation}
\begin{split}
\ratemac^{(1)} & = \expect{\log_2\left(1+\frac{\bm{X}}{s+\bm{Y}}\right)}\\  \ratemac^{(2)} &= \expect{\log_2 \left( 1 + \frac{\bm{Y}}{s} \right)}
\end{split}
	 \label{eq:macrates}, 
\end{equation}
with $\bm{X}$ as received power of the first decoded user and $\bm{Y}$ as received power of the second decoded user. $s$ is the noise variance. Obviously, $\ratemac^{(2)}$ only depends on the marginal distribution of $\bm{Y}$. However, $\ratemac^{(1)}$ depends on the joint distribution of $\bm{X}$ and $\bm{Y}$, i.e., on the received powers of both users. 

\begin{lem}
	The achievable instantaneous rate of the first decoded users $\log(1 + x/(s+y))$ fulfills the Monge condition.
\end{lem}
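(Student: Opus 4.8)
The plan is to verify the defining inequality~\eqref{eq:monge-condition} directly, exploiting the fact that the logarithm turns the additive structure of the \gls{sinr} into something that separates nicely. First I would rewrite the instantaneous rate as a difference of two logarithms,
\begin{equation*}
c(x,y) = \log\!\left(1 + \frac{x}{s+y}\right) = \log(s+x+y) - \log(s+y)\,,
\end{equation*}
which is valid for all $x\geq 0$, $y\geq 0$ and noise variance $s>0$, the regime of interest. Right-continuity (indeed continuity) on this domain is immediate, so only the inequality in Definition~\ref{def:monge-condition} remains.

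Next I would substitute this decomposition into $c(x',y') + c(x,y) - c(x,y') - c(x',y)$ for arbitrary $x'\geq x$ and $y'\geq y$. The four $-\log(s+y)$-type terms contribute $-\log(s+y') - \log(s+y)$ on each side and therefore cancel, reducing the claim to
\begin{equation*}
\log(s+x'+y') + \log(s+x+y) \leq \log(s+x+y') + \log(s+x'+y)\,,
\end{equation*}
i.e., to $(s+x'+y')(s+x+y) \leq (s+x+y')(s+x'+y)$ after exponentiating. Expanding both products, the difference (right minus left) collapses to $(x'-x)(y'-y)\geq 0$, which holds by the ordering assumptions. This establishes the Monge condition and hence the lemma.

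An even shorter route, which I would mention as an alternative, is to invoke Topkis's characterization theorem as in Remark~\ref{rem:topkis-theorem}: since $c$ is twice continuously differentiable on the relevant domain, it suffices to check $\partial^2 c/\partial x\partial y \leq 0$. From the same decomposition one gets $\partial c/\partial x = 1/(s+x+y)$ and thus $\partial^2 c/\partial x\partial y = -1/(s+x+y)^2 \leq 0$, giving the result at once. I do not expect a genuine obstacle here; the only point requiring a word of care is fixing the domain (nonnegative channel powers, strictly positive noise) so that the logarithm and its derivatives are well defined and the ergodic expectation in~\eqref{eq:macrates} is finite, which is needed to legitimately apply Theorem~\ref{thm:bounds-ruschendorf} afterwards.
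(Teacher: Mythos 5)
Your proof is correct, but your primary argument takes a different route from the paper's. The paper proves the lemma in one line via Topkis's characterization theorem (Remark~\ref{rem:topkis-theorem}): it computes $\partial^2 c(x,y)/\partial x\partial y = -1/(s+x+y)^2 \leq 0$ and is done --- this is exactly the ``alternative'' you sketch at the end, so your fallback is the paper's actual proof. Your main argument instead verifies the inequality in Definition~\ref{def:monge-condition} directly: the decomposition $c(x,y)=\log(s+x+y)-\log(s+y)$ makes the $\log(s+y)$-terms cancel in the Monge sum, and exponentiating reduces the claim to $(s+x+y')(s+x'+y)-(s+x'+y')(s+x+y)=(x'-x)(y'-y)\geq 0$, which is a clean and complete elementary verification. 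What your route buys is that it needs no differentiability and mirrors the style of the paper's own worked example for $c(x,y)=x/(1+y)$ in Section~\ref{sec:expectation-bounds-theory}; what the paper's route buys is brevity, since the mixed partial is immediate. Your closing remark about fixing the domain ($x,y\geq 0$, $s>0$) and the finiteness of the expectation is a sensible bit of care that the paper leaves implicit; it is needed for Theorem~\ref{thm:bounds-ruschendorf} but not for the lemma itself.
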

\begin{proof}
{From Remark~\ref{rem:topkis-theorem}, we know that $c$ fulfills the Monge condition, if $\partial^2 c(x,y)/\partial x\partial y\leq 0$ holds.

The needed derivative is given as
\begin{equation}
	\frac{\partial^2 c(x,y)}{\partial x\partial y} = \frac{-1}{(s + x + y)^2}\,,
\end{equation}
which is always less than zero.}%
\end{proof}

As a result, the upper and lower bounds from Theorem~\ref{thm:bounds-ruschendorf} on the average rate $\ratemac^{(1)}$ apply. For Rayleigh fading channels, the bounds on the ergodic rate of the first decoded user are illustrated and compared to statistical independent channels in Fig.~\ref{fig:outage}. The interactive source code for different parameter scenarios is provided in \cite{BesserGitlab}.

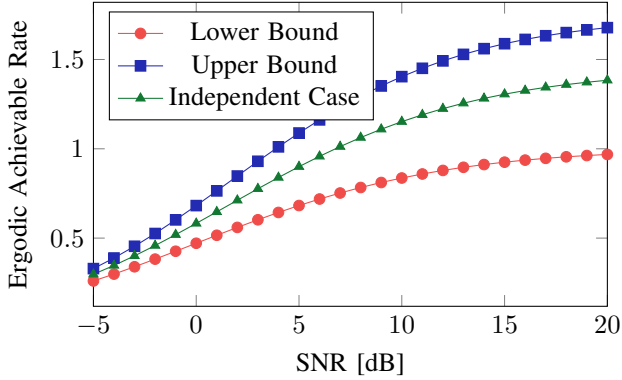
\begin{figure}
	\centering
	\begin{tikzpicture}%
\begin{axis}[
	width=.95\linewidth,
	height=.23\textheight,
	xmin=-5,
	xmax=20,
	xlabel={SNR [dB]},
	ylabel={Ergodic Achievable Rate},
	legend entries={{Lower Bound}, {Upper Bound}, {Independent Case}},
	legend pos=north west,
]
\addplot[plot0, mark=*] table[x=snr,y=min] {data/expectation-mac-rate-lx1-ly1.dat};
\addplot[plot1, mark=square*] table[x=snr,y=max] {data/expectation-mac-rate-lx1-ly1.dat};
\addplot[plot2, mark=triangle*] table[x=snr,y=ind] {data/expectation-mac-rate-lx1-ly1.dat};
\end{axis}
\end{tikzpicture}
	
	\vspace*{-1em}
	\caption{Ergodic capacity of first decoded user in two-user MAC with Rayleigh fading. {The \gls{snr} is given as $\mathsf{SNR}=1/s$.}}
	\label{fig:outage}
\end{figure}

{
\begin{rem}
	Closely related to the achievable rate of the first decoded user in the \gls{mac}, the ergodic sum capacity $\expect{\log_2\left(1+\X+\Y\right)}$ itself depends on the joint distribution of the underlying fading channels. It is easy to verify that the sum rate $\log_2\left(1+x+y\right)$ also fulfills the Monge condition.
\end{rem}
}

\subsection{Ergodic Secret-Key Capacity}
Another example where Theorem~\ref{thm:bounds-ruschendorf} can be applied is the ergodic secret-key capacity. {This average performance metric describes the rate of a secret key generated from using a communication channel and public feedback channel (secret key generation channel model) as described in \cite[Chap.~4]{Bloch2011},}
\begin{equation}
C_{\text{SK}} = \expect{\log_2\left(\frac{1+\X+\Y}{1+\Y}\right)}\,.
\end{equation}
The proof and more details can be found in \cite{Besser2020wsa}. Next to Rayleigh fading, the authors consider the more general case of $\alpha$-$\mu$ fading and show some interesting behavior of the bounds.

\subsection{Proportionally Fair Scheduling}
{After demonstrating a few examples where Theorem~\ref{thm:bounds-ruschendorf} can be directly applied in the context of wireless communications, we want to highlight that it is also possible to apply the theorem, if $-c$ fulfills the Monge condition.}

Consider the ergodic proportional fair rate \cite{bonald2004}, which is the product of the individual rates
\begin{equation}
\expect{\log(1+\X)\log(1+\Y)}\,. \label{eq:pfs}
\end{equation}

The following steps show that this function does not satisfy the Monge condition.
Using
\begin{equation*}
	c(x, y) = \log(\xt)\log(\yt)
\end{equation*}
with $\xt=1+x \leq 1+x'=\xt'$, we get
\begin{align*}
0&\geq \log(\xt')\left(\log(\yt')-\log(\yt)\right) - \log(\xt)\left(\log(\yt')-\log(\yt)\right)\\
0&\geq \underbrace{\left(\log(\xt')-\log(\xt)\right)}_{\geq 0} \underbrace{\left(\log(\yt')-\log(\yt)\right)}_{\geq 0}\,,
\end{align*}
which is not true. {However, it is easy to see that it is true for $-c$. We can therefore calculate the bounds on the expected value of $c$ as described in Remark~\ref{rem:monge-condition-minus-c}.}

{
\subsection{Two-user Collision Channel}
Following the idealized model from \cite[Section II]{Altman2008}, consider a two-user collision channel with two available resource blocks. Both users have a fixed access probability for each channel, i.e., $1 \geq p_i \geq 0$ is the probability that mobile $i$ is active on channel one $\mathbb{P}(M_i = 0) = p_i$. This implies that the probability for mobile $i$ active on channel two is $1-p_i = \mathbb{P}(M_i = 1)$. For statistical independent access probabilities, the resulting transmission success probability is $U(p_1,p_2) =  p_1 (1-p_2) + (1-p_1) p_2$.
If users are allowed to coordinate their access probabilities, the random variables $M_1, M_2$ get dependent with the following joint probability distribution $\mathbb{P}(M_1 = i, M_2 = j) = p_{ij}$. The marginals are $\mathbb{E} M_1 = (1-p_1) = p_{10} + p_{11}$ and $\mathbb{E} M_2 = (1-p_2) = p_{01} + p_{11}$. The correlation coefficient between $M_1, M_2$ is given by
\begin{equation}
\hspace*{-.67em}\rho = \frac{ \mathbb{E}[ M_1 M_2]-\mathbb{E}M_1 \mathbb{E}M_2}{\sqrt{\mathbb{V}(M_1) \mathbb{V}(M_2)}} = \frac{ p_{11} - (1-p_1)(1-p_2)}{\sqrt{ p_1 (1-p_1) p_2 (1-p_2)}} \label{eq:rho}.
\end{equation}
For arbitrarily dependent $M_1, M_2$, the transmission success probability is given by $U(\mathbf{p}) = p_{01} + p_{10}$.
For fixed marginal distributions $p_1, p_2$, the transmission success probability $U = p_{01} + p_{10} = 2- p_1 - p_2 - 2p_{11}$ is equivalent to $-p_{11}$. Since $\rho$ is increasing in $p_{11}$, we conclude that $\rho$ increases iff $U$ decreases, i.e., the success probability is a decreasing function in $\rho$. Please note, that the minimum and maximum correlation coefficient is not always $-1$ and $1$, but it depends on the marginal distributions \cites[Example 16]{Leonov2020}{Embrechts2009}. For $p_1 = p_2 = 1/2$, we obtain minimum and maximum correlation coefficient $-1 \leq \rho \leq 1$ with corresponding success probability between $0 \leq U \leq 1$. 
}
\section{Conclusion and Outlook}\label{sec:conclusion}
The overall goal of the two-part letter is to explain the basics of the methods necessary to study the impact of dependency between random variables with applications to wireless communications. In particular, two typical performance metrics for fast and slow fading channels, the ergodic and the outage performance, depend significantly on the joint distribution of the underlying random fading parameters.
{This first part dealt with the ergodic performance, which is typically used for scenarios with fast-fading channels. In the second part~\cite{Besser2020part2}, we will investigate the outage probability of slow-fading channels.}

The presented tools allow engineers to compute lower and upper bounds for arbitrary dependency structures. The worst-case bounds are then applied for robust system design under uncertainties, while the upper bounds are the best possible achievable dependencies.
{As shown in \cite{Besser2019}, it is possible to have a positive zero-outage capacity for dependent fading channels. It is therefore of interest for future research how such dependency structures can be set up in real communication systems, especially in the context of \gls{urllc}.}
One way to actively achieve them might be by tweaking the radio propagation conditions, e.g., with new emerging technologies, including \glspl{ris} \cite{DiRenzo2019} and the traditional relaying.

{Besides providing performance bounds, copulas can also be used to flexibly model the joint distribution. One can then calculate the system's performance for this specific joint distribution. There exist different parametric copula families that provide a more flexible and general way of modeling the dependency than only considering linear correlation~\cite{Nelsen2006}. One advantage is that they allow modeling tail-dependency. Currently, this is often used in the finance, e.g., for portfolio analysis~\cite{Fischer2015}. In communications, this could be interesting in the context of \gls{urllc}~\cite{Bennis2018} and other applications in 6G~\cite{Dang2020}.}

{Statistical dependency has an important impact in many more applications. These include cognitive radio~\cite{Wang2019}, queuing~\cite{Sun2018}, and improper signaling~\cite{Hellings2015}. A more general dependency analysis using copulas might also be beneficial in the context of finite-length information theory.}

\printbibliography

\end{document}